
\documentclass[a4paper,twoside]{article}
\usepackage{amsmath,amssymb,mathrsfs,bm}
\newcommand{\realnumbers}{\mathbb{R}}
\newcommand{\integers}{\mathbb{Z}}
\newcommand{\pa}{\partial}
\bmdefine{\bZ}{Z}

\usepackage{geometry}

\usepackage{fancyhdr}
\pagestyle{fancy}
\lhead[\thepage]{}
\chead[\textrm{Yasuhito} \textsc{Kaminaga}]{\it Poisson Bracket and Symplectic Structure}
\rhead[]{\thepage}
\lfoot[]{}
\cfoot[]{}
\rfoot[]{}

\setlength{\headheight}{15pt}

\newcommand{\pfa}[2]{\frac{\pa #1}{\pa p_a}\wedge\frac{\pa #2}{\pa\varphi^a}}
\newcommand{\fpa}[2]{\frac{\pa #1}{\pa\varphi^a}\wedge\frac{\pa #2}{\pa p_a}}
\newcommand{\psa}[2]{\frac{\pa #1}{\pa\pi_\alpha}\wedge\frac{\pa #2}{\pa\psi^\alpha}}
\newcommand{\spa}[2]{\frac{\pa #1}{\pa\psi^\alpha}\wedge\frac{\pa #2}{\pa\pi_\alpha}}

\newcommand{\scrO}{\mathscr{O}}
\bmdefine{\bpa}{\partial}
\bmdefine{\bwedge}{\wedge}
\bmdefine{\blambda}{\lambda}
\bmdefine{\bomega}{\omega}
\bmdefine{\bd}{d}
\bmdefine{\bX}{X}
\bmdefine{\bY}{Y}
\newcommand{\Hom}{\mathop{\rm Hom}\nolimits}
\newcommand{\Der}{\mathop{\rm Der}\nolimits}
\newcommand{\la}{\langle}
\newcommand{\ra}{\rangle}
\usepackage{cite}

\usepackage{theorem}
\newtheorem{Thm}{Theorem}
\newtheorem{Def}[Thm]{Definition}
\newtheorem{Prop}[Thm]{Proposition}

\newenvironment{proof}{\noindent\textit{Proof.}}{\hfill$\square$}

\begin{document}

\title{
\vspace{-12ex}
\begin{flushright}
{\normalsize EMPG--17--03}
\end{flushright}
\vspace{9ex}
\textbf{Poisson Bracket and Symplectic Structure of Covariant Canonical Formalism of Fields}
} 

\author{\textbf{Yasuhito Kaminaga\footnote{
Permanent Address: 
Department of Mathematics, 
National Institute of Technology, Gunma College, 
Toriba, 
Maebashi, Gunma 371--8530, Japan
~
E-mail: kaminaga@nat.gunma-ct.ac.jp
}} 
\\ \normalsize 
School of Mathematics, 
The University of Edinburgh, \\ \normalsize 
James Clerk Maxwell Building, Peter Guthrie Tait Road,  
\\ \normalsize 
Edinburgh EH9 3FD, United Kingdom
}
\date{}
\maketitle

\begin{abstract}
The covariant canonical formalism 
is a covariant extension of the traditional canonical formalism of fields. 
In contrast to the traditional canonical theory, 
it has a remarkable feature that canonical equations 
of gauge theories or gravity are not only manifestly Lorentz covariant 
but also gauge covariant or diffeomorphism covariant. 
A mathematical peculiarity of the covariant canonical formalism 
is that its canonical coordinates are differential forms on a manifold. 
In the present paper, we find a natural Poisson bracket of 
this new canonical theory, and study symplectic structure behind it. 
The phase space of the theory is identified with a ringed space 
with the structure sheaf of the graded algebra of ``differentiable'' 
differential forms on the manifold. 
The Poisson and the symplectic structure we found can be even or odd, 
depending on the dimension of the manifold. 
Our Poisson structure is an example of physical application of Poisson structure 
defined on the graded algebra of differential forms. 
\end{abstract}

\begin{flushleft} 
{\it 
Keywords: Differential Form; Analytical Mechanics; Gauge Theory; Gravity; 
Graded Lie Algebra; Graded Manifold; Poisson Bracket; Symplectic Structure 
\\
PACS (2010): 
02.40.-k, 
03.50.-z, 
04.20.-q, 
04.20.Fy, 
11.10.Ef, 
11.15.-q  
}
\end{flushleft}
\newpage

\section{Introduction}

In the previous paper \cite{Kaminaga}, 
the present author developed a new analytical mechanics of fields 
which treats space and time on an equal footing \cite{Nakamura,Nester,Nakajima1,Nakajima2}. 
A similar theory has been studied by Th\'eophile De Donder \cite{DeDonder}, Hermann Weyl \cite{Weyl} 
and others \cite{Kastrup,Kanatchikov,CrnkovicWitten,ForgerRomero,ForgerSalles,Sharapov}, 
but there is a crucial difference between theirs and ours. 
That is, they adopt components of tensors (not tensors themselves) as canonical variables 
while we adopt differential forms themselves (not their components) as canonical variables. 
This difference causes drastic changes to the resultant theory at least formally.  
The De Donder-Weyl theory including multisymplectic formalism 
\cite{DeDonder,Weyl,Kastrup,Kanatchikov,CrnkovicWitten,ForgerRomero,ForgerSalles,Sharapov} 
is more similar to the traditional analytical mechanics of fields than ours. 
For example, in their theory one needs gauge fixing or Dirac bracket or something like that 
to obtain their canonical equations of gauge theories or those of gravity, 
as it is the case with the traditional canonical formalism. 
In our theory, however, one can obtain 
canonical equations of gauge theories or those of gravity 
without fixing a gauge nor introducing Dirac bracket 
nor any other artificial tricks. 
Our canonical equations are not only manifestly Lorentz covariant 
but also keep any kind of gauge freedom including diffeomorphism. 
Our new analytical mechanics of fields is impressive with its simplicity, 
straightforward nature and mathematical beauty. 
The purpose of the present paper is to study Poisson structure of our 
canonical formalism, and to clarify symplectic structure behind it. 

This paper is organised as follows. 
In Section \ref{sec:Covariant Canonical Formalism of Fields} 
we give a brief review of our covariant canonical formalism \cite{Kaminaga}. 
In Section \ref{sec:Finding Poisson Bracket}, 
we find a natural definition of our Poisson bracket on the basis of 
physical motivations and heuristic speculations. 
In Section \ref{sec:List of Poisson Bracket Formulae}, 
we collect important formulae of our Poisson bracket. 
The heuristic part ends here and 
purely mathematical part begins at Section \ref{sec:Phase Space}. 
In Sections \ref{sec:Phase Space} and \ref{sec:Lie Derivatives}, 
we develop mathematical tools needed for the study of our symplectic structure. 
In Section \ref{sec:Symplectic Structure}, 
we study symplectic structure of our covariant canonical formalism 
using tools prepared in Sections \ref{sec:Phase Space} and \ref{sec:Lie Derivatives},  
and show that it leads to our Poisson bracket found in the heuristic part.

\section{Covariant Canonical Formalism of Fields}
\label{sec:Covariant Canonical Formalism of Fields}

Let $M$ be a $C^\infty$-manifold of dimension $n$, and let 
$\Omega(M)=\bigoplus_{r=0}^{n}\Omega^{r}(M)$ be the 
$\integers_2$-graded commutative algebra of differential forms on $M$. 
For a map $\phi:\Omega^{p_1}(M)\times\cdots\times\Omega^{p_k}(M)\to\Omega^{r}(M)$, 
we put $\beta=\phi(\alpha^1,\cdots,\alpha^k)\in\Omega^r(M)$ with $\alpha^i\in\Omega^{p_i}(M)$. 
If there exists a differential form $\omega_i\in\Omega^{r-p_i}(M)$ 
such that $\beta$ behaves under variations $\delta\alpha^1,\cdots,\delta\alpha^k$ 
as $\delta\beta=\sum_i\delta\alpha^i\wedge\omega_i$, then we say 
the differential form $\beta$ is \textit{differentiable} 
with respect to another differential form $\alpha^i$, 
and define the derivative of $\beta$ by $\alpha^i$ to be 
$\pa \beta/\pa \alpha^i=\omega^i$. 
In what follows, we assume sum over repeated indices unless otherwise stated. 

Suppose physical fields are represented with 
even-forms $\varphi^a\in\Omega(M)$ and/or odd-forms $\psi^\alpha\in\Omega(M)$. 
We define a Lagrangian $n$-form $L\in\Omega(M)$ 
so that the covariant Lagrange equations 
\begin{align*}
  \frac{\pa L}{\pa\varphi^a}-d\frac{\pa L}{\pa d\varphi^a}=0, \quad 
  \frac{\pa L}{\pa\psi^\alpha}+d\frac{\pa L}{\pa d\psi^\alpha}=0 
\end{align*}
coincide with field equations. 
Here, we have implicitly assumed, of course, 
that $L$ is differentiable with respect to $\varphi^a$, $d\varphi^a$, $\psi^\alpha$ and $d\psi^\alpha$. 
Then, we define the conjugate momentum forms $p_a, \pi_\alpha\in\Omega(M)$ as 
\begin{align}
  p_a=\frac{\pa L}{\pa d\varphi^a}, \quad 
  \pi_\alpha=\frac{\pa L}{\pa d\psi^\alpha}, 
  \label{ph:def_of_momentum}
\end{align}
and the Hamiltonian $n$-form $H\in\Omega(M)$ as 
\begin{align*}
  H=d\varphi^a\wedge p_a +d\psi^\alpha\wedge\pi_\alpha -L . 
\end{align*}
Suppose that the system is non-singular, that is, 
(\ref{ph:def_of_momentum}) can be solved conversely and 
$d\varphi^a$ and $d\psi^\alpha$ are represented uniquely with 
canonical variables $\varphi^a$, $p_a$, $\psi^\alpha$ and $\pi_\alpha$. 
Then, $H$ is differentiable with respect to canonical variables, and 
we get our covariant canonical equations 
\begin{align}
  d\varphi^a = -(-1)^n\frac{\pa H}{\pa p_a}, \qquad
  dp_a = -\frac{\pa H}{\pa\varphi^a}, \qquad
  d\psi^\alpha = \frac{\pa H}{\pa \pi_\alpha}, \qquad
  d\pi_\alpha = \frac{\pa H}{\pa\psi^\alpha}.
  \label{ph:canonical_equations}
\end{align}
Surprisingly, the non-singular assumption is satisfied 
not only by gauge theories but also by gravity \cite{Kaminaga}. 
Indeed, our canonical equations (\ref{ph:canonical_equations}) 
are manifestly Lorentz covariant as well as gauge/diffeomorphism covariant. 

In the following sections, we will denote by $\Lambda^r(M)$ 
the set of $r$-forms, on $M$,  
which are sufficiently many times differentiable 
with respect to canonical variables $\varphi^a$, $p_a$, $\psi^\alpha$ and $\pi_\alpha$. 
$\Lambda(M)=\bigoplus_{r=0}^{n}\Lambda^{r}(M)$ 
forms a $\integers_2$-graded commutative algebra. 
Needless to say, it is a subalgebra of $\Omega(M)$.

\section{Finding Poisson Bracket}
\label{sec:Finding Poisson Bracket}

Let us find a natural definition of the Poisson bracket of the covariant canonical formalism. 
To avoid complexity, we first restrict ourselves to $n=\text{even}$ case only, 
where $n$ is the dimension of the manifold $M$. 
When $n$ is even, it follows that 
$\varphi^a$ is even, $p_a$ is odd, $\psi^\alpha$ is odd, 
$\pi_\alpha$ is even and $H$ is even. 
Our canonical equations (\ref{ph:canonical_equations}) reduce to 
\begin{align}
  d\varphi^a=-\frac{\pa H}{\pa p_a}, \quad 
  dp_a=-\frac{\pa H}{\pa\varphi^a}, \quad
  d\psi^\alpha=\frac{\pa H}{\pa\pi_\alpha}, \quad 
  d\pi_\alpha=\frac{\pa H}{\pa\psi^\alpha} . 
  \label{ph:canonical_equations_when_n=even}
\end{align}
Let $F\in\Lambda(M)$ be an arbitrary, even or odd, 
differentiable form on the ``phase space''.  
At the moment, we naively consider our phase space 
to be something like a space of which ``coordinates'' are differential forms 
$\varphi^a$, $p_a$, $\psi^\alpha$ and $\pi_\alpha$. 
Its sophisticated definition will be given in Section \ref{sec:Phase Space}. 
We then obtain 
\begin{align*}
  dF 
  &=d\varphi^a\wedge\frac{\pa F}{\pa\varphi^a}
  +dp_a\wedge\frac{\pa F}{\pa p_a}
  +d\psi^\alpha\wedge\frac{\pa F}{\pa\psi^\alpha}
  +d\pi_\alpha\wedge\frac{\pa F}{\pa\pi_\alpha} \\
  &=-\pfa{H}{F}-\fpa{H}{F}+\psa{H}{F}+\spa{H}{F} , 
\end{align*}
where we have used (\ref{ph:canonical_equations_when_n=even}). 
Now, we postulate 
\begin{align}
  dF=-\{ H,F \} 
  \label{ph:eq_of_motion} 
\end{align}
to define our Poisson bracket as 
\begin{align*}
  \{ H,F \}
  =\left(\pfa{H}{F}+\fpa{H}{F}\right)-\left(\psa{H}{F}+\spa{H}{F}\right) .
\end{align*}
Remembering that $H$ is an even-form, we generalise the definition as 
\begin{align}
  \{ E_1,F \}
  =\left(\pfa{E_1}{F}+\fpa{E_1}{F}\right)-\left(\psa{E_1}{F}+\spa{E_1}{F}\right) .
  \label{ph:n=even:EF}
\end{align}
Here, $E_1\in\Lambda(M)$ denotes an arbitrary differentiable even-form. 
In what follows, unless otherwise mentioned, 
$E$, $E_1$, $E_2$, $E_3$ $\in\Lambda(M)$ are all even-forms 
and $O$, $O_1$, $O_2$, $O_3$ $\in\Lambda(M)$ are all odd-forms. 
Putting $F=E_2$, $O_2$ in (\ref{ph:n=even:EF}), we obtain
\begin{align}
  \{ E_1,E_2 \}
  &=\left(\pfa{E_1}{E_2}+\fpa{E_1}{E_2}\right)-\left(\psa{E_1}{E_2}+\spa{E_1}{E_2}\right) , 
  \label{ph:n=even:EE} \\
  \{ E_1,O_2 \}
  &=\left(\pfa{E_1}{O_2}+\fpa{E_1}{O_2}\right)-\left(\psa{E_1}{O_2}+\spa{E_1}{O_2}\right) .
  \label{ph:n=even:EO}
\end{align}
Next, let us put $E_1=O_1\wedge o$ in (\ref{ph:n=even:EE}) and (\ref{ph:n=even:EO}), 
in which $o$ is a constant odd-form, an invariant odd-form under variations of 
canonical variables. 
Then, we obtain 
\begin{align*}
  \{O_1\wedge o, E_2\}
  &=\left[
  \left(\pfa{O_1}{E_2}-\fpa{O_1}{E_2}\right)
  +\left(\psa{O_1}{E_2}-\spa{O_1}{E_2}\right)
  \right]\wedge o , 
  \\
  \{O_1\wedge o, O_2\}
  &=-\left[
  \left(\pfa{O_1}{O_2}-\fpa{O_1}{O_2}\right)
  +\left(\psa{O_1}{O_2}-\spa{O_1}{O_2}\right)
  \right]\wedge o . 
\end{align*}
These results suggest us for putting additional definitions as 
\begin{align}
  &\{O_1, E_2\} 
  = 
  t\left(\pfa{O_1}{E_2}-\fpa{O_1}{E_2}\right)
  +t\left(\psa{O_1}{E_2}-\spa{O_1}{E_2}\right) ,
  \label{ph:n=even:OE} \\
  &\{O_1, O_2\} 
  = 
  t'\left(\pfa{O_1}{O_2}-\fpa{O_1}{O_2}\right)
  +t'\left(\psa{O_1}{O_2}-\spa{O_1}{O_2}\right) . 
  \label{ph:n=even:OO}
\end{align}
Here, $t$ and $t'$ are yet unknown real constants, which we assume to be $1$ or $-1$. 
Now notice that the Poisson bracket we have just defined in 
(\ref{ph:n=even:EE})(\ref{ph:n=even:EO})(\ref{ph:n=even:OE})(\ref{ph:n=even:OO}) 
is an \textit{odd-bracket}; 
that is, $\{ E_1,E_2 \}$ is odd, $\{ E_1,O_2 \}$ is even, 
$\{O_1,E_2 \}$ is even, and $\{ O_1,O_2 \}$ is odd. 
We can fix $t$ and $t'$ postulating ``Leibniz rule'' 
in the form of 
$\{A, \{B, C\} \}=\{ \{A, B\}, C \} \pm \{ B , \{ A, C \}\}$. 
Our bracket, for example, satisfies 
\begin{align*}
  &\{E, \{E_1, O_2\} \}=-t'\{ \{E, E_1\}, O_2 \} -\{ E_1 , \{ E, O_2 \}\}   , \\
  &\{E, \{O_1, E_2\} \}=-t\{ \{E, O_1\}, E_2 \} +tt'\{ O_1 , \{ E, E_2 \}\} .
\end{align*}
Hence, we get $t=t'=-1$.  Then, our bracket has been totally fixed. 

Now, let us turn our attention to $n=\text{odd}$ case. 
We here outline points only 
since we can consider it along the same lines in the above. 
A difference occurs after equation (\ref{ph:eq_of_motion}) 
because $H$ is an odd-form when $n$ is odd. 
That is, when $n$ is odd, the bracket we obtain from (\ref{ph:eq_of_motion}) 
is $\{ O_1, F \}$, instead of $\{ E_1, F \}$ in (\ref{ph:n=even:EF}). 
Other consideration goes well. 
After lengthy calculation, we finally get a unique Poisson bracket for $n=\text{odd}$ case, too. 
Again, notice that when $n$ is odd, our bracket becomes an \textit{even-bracket};  
that is, $\{ E_1,E_2 \}$ is even, $\{ E_1,O_2 \}$ is odd, 
$\{O_1,E_2 \}$ is odd, and $\{ O_1,O_2 \}$ is even. 
We will summarise 
important formulae of our Poisson bracket in the next section. 

Before closing this section, 
note that our Poisson bracket is defined on the space of differential forms. 
Research on the generalisation of the traditional Poisson bracket to a graded Lie bracket 
on the space of differential forms has been a long-standing topic of mathematics 
\cite{Abraham,Koszul,Michor,Cantrijin,Beltran-Monterde,Grabowski,Monterde-Vallejo}. 
Our bracket provides a concrete example of such a generalised bracket 
with physical motivation.

\section{List of Poisson Bracket Formulae}
\label{sec:List of Poisson Bracket Formulae}

In this section, we list, without verbose description, important formulae of our Poisson bracket 
heuristically found in the previous section. 
One can prove all the formulae listed here 
by straightforward calculation using definitions, 
though it is tedious. 
We will give an alternative elegant proof in Section \ref{sec:Symplectic Structure} in this paper. 
Although we have omitted from the following list, the formula (\ref{ph:eq_of_motion}) holds, 
of course, by construction; 
it is the field equation of a differentiable form $F\in\Lambda(M)$ 
with our Poisson bracket and the Hamiltonian form $H$.

\subsection{Poisson Bracket in Even Dimensions}

\noindent
\textit{Definition}
\begin{align}
\begin{split}
  \underset{\text{odd}}{\{ E_1,E_2 \}}
  &=\left(\pfa{E_1}{E_2}+\fpa{E_1}{E_2}\right)-\left(\psa{E_1}{E_2}+\spa{E_1}{E_2}\right) 
  \\
  \underset{\text{even}}{\{ E_1,O_2 \}}
  &=\left(\pfa{E_1}{O_2}+\fpa{E_1}{O_2}\right)-\left(\psa{E_1}{O_2}+\spa{E_1}{O_2}\right) 
  \\
  \underset{\text{even}}{\{O_1, E_2\}}
  &= 
  -\left(\pfa{O_1}{E_2}-\fpa{O_1}{E_2}\right)
  -\left(\psa{O_1}{E_2}-\spa{O_1}{E_2}\right)
  \\
  \underset{\text{odd}}{\{O_1, O_2\}}
  &= 
  -\left(\pfa{O_1}{O_2}-\fpa{O_1}{O_2}\right)
  -\left(\psa{O_1}{O_2}-\spa{O_1}{O_2}\right)
\end{split}\label{ph:n=evenPB}
\end{align}
\textit{Fundamental Brackets}
\begin{align}
  \{\varphi^a, p_b \}=\delta^a_b, \quad 
  \{p_b, \varphi^a \}=-\delta^a_b, \quad 
  \{\psi^\alpha, \pi_\beta \}=\delta^\alpha_\beta, \quad 
  \{\pi_\beta , \psi^\alpha \}=-\delta^\alpha_\beta, \quad 
  \text{others}=0
  \label{ph:n=evenFPB}
\end{align}
\textit{Symmetry}
\begin{align}
  \{E_2, E_1\}=\{E_1, E_2\}, \quad 
  \{O, E\}=-\{E, O\}, \quad 
  \{O_2, O_1\}=-\{O_1, O_2\}
  \label{ph:n=evenSymmetry}
\end{align}
\textit{Leibniz Rule I}
\begin{align}
\begin{split}
  &\{E, E_1\wedge E_2\}=\{E, E_1 \}\wedge E_2 +E_1\wedge \{E, E_2 \} \\
  &\{E, E_1\wedge O_2\}=\{E, E_1 \}\wedge O_2 +E_1\wedge \{E, O_2 \} \\
  &\{E, O_1\wedge E_2\}=\{E, O_1 \}\wedge E_2 -O_1\wedge \{E, E_2 \} \\
  &\{E, O_1\wedge O_2\}=\{E, O_1 \}\wedge O_2 -O_1\wedge \{E, O_2 \} \\
  &\{O, E_1\wedge E_2\}=\{O, E_1 \}\wedge E_2 +E_1\wedge \{O, E_2 \} \\
  &\{O, E_1\wedge O_2\}=\{O, E_1 \}\wedge O_2 +E_1\wedge \{O, O_2 \} \\
  &\{O, O_1\wedge E_2\}=\{O, O_1 \}\wedge E_2 +O_1\wedge \{O, E_2 \} \\
  &\{O, O_1\wedge O_2\}=\{O, O_1 \}\wedge O_2 +O_1\wedge \{O, O_2 \} \\
\end{split}\label{ph:n=evenLeibnizI}
\end{align}
\textit{Jacobi identity}
\begin{align}
\begin{split}
  &\{E_1, \{E_2, E_3 \}\} + \{E_2, \{E_3, E_1 \}\} + \{E_3, \{E_1, E_2 \}\} =0 \\
  &\{E_1, \{E_2, O \}\} -\{E_2, \{O, E_1 \}\} + \{O, \{E_1, E_2 \}\} =0 \\
  & \{E, \{O_1, O_2 \}\} + \{O_1, \{O_2, E \}\} + \{O_2, \{E, O_1 \}\} =0 \\
  &\{O_1, \{O_2, O_3 \}\} + \{O_2, \{O_3, O_1 \}\} + \{O_3, \{O_1, O_2 \}\} =0 \\
\end{split}\label{ph:n=evenJacobi}
\end{align}
\textit{Leibniz Rule I\!I (Another Form of Jacobi Identity)}
\begin{align}
\begin{split}
&\{E, \{E_1, E_2 \}\} = \{\{E, E_1 \}, E_2 \} - \{E_1, \{ E, E_2 \}\} \\
&\{E, \{E_1, O_2 \}\} = \{\{E, E_1 \}, O_2 \} - \{E_1, \{ E, O_2 \}\} \\
&\{E, \{O_1, E_2 \}\} = \{\{E, O_1 \}, E_2 \} + \{O_1, \{ E, E_2 \}\} \\
&\{E, \{O_1, O_2 \}\} = \{\{E, O_1 \}, O_2 \} + \{O_1, \{ E, O_2 \}\} \\
&\{O, \{E_1, E_2 \}\} = \{\{O, E_1 \}, E_2 \} + \{E_1, \{ O, E_2 \}\} \\
&\{O, \{E_1, O_2 \}\} = \{\{O, E_1 \}, O_2 \} + \{E_1, \{ O, O_2 \}\} \\
&\{O, \{O_1, E_2 \}\} = \{\{O, O_1 \}, E_2 \} + \{O_1, \{ O, E_2 \}\} \\
&\{O, \{O_1, O_2 \}\} = \{\{O, O_1 \}, O_2 \} + \{O_1, \{ O, O_2 \}\} 
\end{split}\label{ph:n=evenLeibnizII}
\end{align}

\subsection{Poisson Bracket in Odd Dimensions}

\noindent
\textit{Definition}
\begin{align}
\begin{split}
  \underset{\text{even}}{\{E_1, E_2\}}
  &= 
  -\left(\pfa{E_1}{E_2}-\fpa{E_1}{E_2}\right)
  +\left(\psa{E_1}{E_2}+\spa{E_1}{E_2}\right)
  \\
  \underset{\text{odd}}{\{E_1, O_2\}}
  &= 
  -\left(\pfa{E_1}{O_2}-\fpa{E_1}{O_2}\right)
  +\left(\psa{E_1}{O_2}+\spa{E_1}{O_2}\right)
  \\
  \underset{\text{odd}}{\{ O_1,E_2 \}}
  &=-\left(\pfa{O_1}{E_2}-\fpa{O_1}{E_2}\right)-\left(\psa{O_1}{E_2}+\spa{O_1}{E_2}\right) 
  \\
  \underset{\text{even}}{\{ O_1,O_2 \}}
  &=-\left(\pfa{O_1}{O_2}-\fpa{O_1}{O_2}\right)-\left(\psa{O_1}{O_2}+\spa{O_1}{O_2}\right) 
\end{split}\label{ph:n=oddPB}
\end{align}
\textit{Fundamental Brackets}
\begin{align}
  \{\varphi^a, p_b \}=\delta^a_b, \quad 
  \{p_b, \varphi^a \}=-\delta^a_b, \quad 
  \{\psi^\alpha, \pi_\beta \}=-\delta^\alpha_\beta, \quad 
  \{\pi_\beta , \psi^\alpha \}=-\delta^\alpha_\beta, \quad 
  \text{others}=0
  \label{ph:n=oddFPB}
\end{align}
\textit{Symmetry}
\begin{align}
  \{E_2, E_1\}=-\{E_1, E_2\}, \quad 
  \{O, E\}=-\{E, O\}, \quad 
  \{O_2, O_1\}=\{O_1, O_2\}
  \label{ph:n=oddSymmetry}
\end{align}
\textit{Leibniz Rule I}
\begin{align}
\begin{split}
  &\{E, E_1\wedge E_2\}=\{E, E_1 \}\wedge E_2 +E_1\wedge \{E, E_2 \} \\
  &\{E, E_1\wedge O_2\}=\{E, E_1 \}\wedge O_2 +E_1\wedge \{E, O_2 \} \\
  &\{E, O_1\wedge E_2\}=\{E, O_1 \}\wedge E_2 +O_1\wedge \{E, E_2 \} \\
  &\{E, O_1\wedge O_2\}=\{E, O_1 \}\wedge O_2 +O_1\wedge \{E, O_2 \} \\
  &\{O, E_1\wedge E_2\}=\{O, E_1 \}\wedge E_2 +E_1\wedge \{O, E_2 \} \\
  &\{O, E_1\wedge O_2\}=\{O, E_1 \}\wedge O_2 +E_1\wedge \{O, O_2 \} \\
  &\{O, O_1\wedge E_2\}=\{O, O_1 \}\wedge E_2 -O_1\wedge \{O, E_2 \} \\
  &\{O, O_1\wedge O_2\}=\{O, O_1 \}\wedge O_2 -O_1\wedge \{O, O_2 \} \\
\end{split}\label{ph:n=oddLeibnizI}
\end{align}
\textit{Jacobi identity}
\begin{align}
\begin{split}
  &\{E_1, \{E_2, E_3 \}\} + \{E_2, \{E_3, E_1 \}\} + \{E_3, \{E_1, E_2 \}\} =0 \\
  &\{E_1, \{E_2, O \}\} + \{E_2, \{O, E_1 \}\} + \{O, \{E_1, E_2 \}\} =0 \\
  &\{E, \{O_1, O_2 \}\} + \{O_1, \{O_2, E \}\} - \{O_2, \{E, O_1 \}\} =0 \\
  &\{O_1, \{O_2, O_3 \}\} + \{O_2, \{O_3, O_1 \}\} + \{O_3, \{O_1, O_2 \}\} =0 \\
\end{split}\label{ph:n=oddJacobi}
\end{align}
\textit{Leibniz Rule I\!I (Another Form of Jacobi Identity)}
\begin{align}
\begin{split}
&\{E, \{E_1, E_2 \}\} = \{\{E, E_1 \}, E_2 \} + \{E_1, \{ E, E_2 \}\} \\
&\{E, \{E_1, O_2 \}\} = \{\{E, E_1 \}, O_2 \} + \{E_1, \{ E, O_2 \}\} \\
&\{E, \{O_1, E_2 \}\} = \{\{E, O_1 \}, E_2 \} + \{O_1, \{ E, E_2 \}\} \\
&\{E, \{O_1, O_2 \}\} = \{\{E, O_1 \}, O_2 \} + \{O_1, \{ E, O_2 \}\} \\
&\{O, \{E_1, E_2 \}\} = \{\{O, E_1 \}, E_2 \} + \{E_1, \{ O, E_2 \}\} \\
&\{O, \{E_1, O_2 \}\} = \{\{O, E_1 \}, O_2 \} + \{E_1, \{ O, O_2 \}\} \\
&\{O, \{O_1, E_2 \}\} = \{\{O, O_1 \}, E_2 \} - \{O_1, \{ O, E_2 \}\} \\
&\{O, \{O_1, O_2 \}\} = \{\{O, O_1 \}, O_2 \} - \{O_1, \{ O, O_2 \}\} 
\end{split}\label{ph:n=oddLeibnizII}
\end{align}

\section{Phase Space}
\label{sec:Phase Space}

Let us change our subject slightly. 
In the preceding sections, we have heuristically introduced the Poisson bracket. 
In that context, our bracket seems merely an assembly of ad hoc definitions, 
but nevertheless it has a profound geometrical meaning 
as is shown below. 
In the remaining part of the paper, we turn our attention to mathematical structure 
behind the Poisson bracket.  
After preparing necessary tools in the present and the next sections, 
we will get back to the issue of 
the Poisson bracket in Section \ref{sec:Symplectic Structure}. 

Let us start our consideration from the definition of phase space. 
The phase space of our canonical theory, intuitively, is a space of which ``coordinates'' 
are forms on an $n$-dimensional $C^\infty$-manifold $M$. 
Obviously, we need a sophisticated definition to it to go further. 
Remember that for any open subset $U\subset M$, there exists a short exact sequence 
\begin{align*}
  0 
  \longrightarrow \Lambda'(U) 
  \longrightarrow \Lambda(U)
  \longrightarrow C^\infty(U)
  \longrightarrow 0, 
\end{align*}
where $\Lambda'(U)$ is the ideal 
of nilpotent elements in $\Lambda(U)$, 
a $\integers_2$-graded commutative algebra of differentiable differential forms on $U$. 
Considering this in mind, we identify our phase space with a ringed space $(M,\scrO_M)$ 
with the structure sheaf $\scrO_M: U \mapsto \Lambda(U)$. 
Notice that $(M,\scrO_M)$ is a kind of what is called a supermanifold 
\cite{Kostant,Bartocci,Tuynman,Rogers,Carmeli}. 
In the following, we think of $x^i\in\Lambda(U)$ as any of canonical variables 
$\varphi^a, p_a, \psi^\alpha, \pi_\alpha$. 
We define $\integers_2$-graded degree of $x^i$ as $|x^i|=r$ ($\bmod\, 2$) for $x^i\in\Lambda^r(U)$. 
In our construction, 
the set of ``functions'' on $(M,\scrO_M)$ is identified with $\Lambda(M)$. 
A vector field $\bX$ on $(M,\scrO_M)$ is an $\realnumbers$-linear derivation of $\Lambda(U)$ 
\begin{align*}
  \bX=X^i\wedge\frac{\bpa}{\pa x^i}:~ \Lambda(U)\to\Lambda(U)
\end{align*}
with $X^i\in\Lambda(U)$ such that 
$\bX f=X^i\wedge{\pa f}/{\pa x^i}$ for $f\in\Lambda(U)$. 
The set of all vector fields on $(M,\scrO_M)$ is denoted $\Der\!\Lambda(M)$, 
which is a $\integers_2$-graded left $\Lambda(M)$-module. 
$|\bX|$ implies $\integers_2$-graded degree of $\bX$. 
We define the dual basis $\bd x^i\in \Hom_{\Lambda(U)}(\Der\!\Lambda(U),\Lambda(U))$ of 
$\bpa/\pa x^i \in\Der\!\Lambda(U)$ as 
\begin{align*}
  \left\la \frac{\bpa}{\pa x^j} ; \bd x^i  \right\ra =\frac{\pa x^i}{\pa x^j}=\delta^i_j .
\end{align*}
Differential forms on $(M,\scrO_M)$ can be defined in a straightforward manner 
following generalised techniques of the usual manifold theory  
so as to accommodate $\integers_2$-gradings of the structure sheaf $\scrO_M$. 
The exterior product $\bwedge$ on $(M,\scrO_M)$, 
which we write in baldface to distinguish from the exterior product $\wedge$ on $M$, 
is defined so as to satisfy on $(U,\scrO_U)$ 
\begin{align*}
  &\bd x^i \bwedge \bd x^j = \bd x^i \otimes \bd x^j - (-1)^{x^ix^j} \bd x^j \otimes \bd x^i , \\
  &\bd x^i \bwedge \bd x^j \bwedge \bd x^k 
  = \bd x^i \otimes \bd x^j \otimes \bd x^k 
  + (-1)^{x^i(x^j+x^k)} \bd x^j \otimes \bd x^k \otimes \bd x^i \notag \\ 
  &\qquad\qquad\qquad\quad 
  + (-1)^{(x^i + x^j) x^k} \bd x^k \otimes \bd x^i \otimes \bd x^j 
  - (-1)^{x^i x^j} \bd x^j \otimes \bd x^i \otimes \bd x^k \notag \\ 
  &\qquad\qquad\qquad\quad 
  - (-1)^{x^j x^k} \bd x^i \otimes \bd x^k \otimes \bd x^j 
  - (-1)^{x^j x^k + x^i(x^j + x^k)} \bd x^k \otimes \bd x^j \otimes \bd x^i , 
\end{align*}
and so on. 
Throughout the present paper, 
we abbreviate the degree symbol $|~~ |$ 
in the exponent of $(-1)$ for simplicity. 
As the readers have anticipated, 
we will use boldfaced $\bd$ for the exterior derivative on $(M,\scrO_M)$ 
distinguishing from $d$ on $M$. 
It is important to notice that 
differential forms on $(M,\scrO_M)$, 
which we denote by $\Omega(M,\scrO_M)=\bigoplus_{r=0}^n\Omega^r(M,\scrO_M)$, 
form a $(\integers_2\times\integers_2)$-bigraded right $\Lambda(M)$-module. 
We will say that a differential form $\blambda$ on $(M,\scrO_M)$ has a $(\integers_2\times\integers_2)$-bidegree 
$|\blambda|=\begin{pmatrix}r \\ s \end{pmatrix}$ ($\bmod\, 2$)
if it is an $r$-form $\blambda\in\Omega^r(M,\scrO_M)$ which satisfies  
\begin{align}
 \big| \la \bX_1,\dots,\bX_r;\blambda\ra \big|=\sum_{k=1}^r | \bX_i | + s
\end{align}
for all vector fields $\bX_1,\cdots,\bX_r \in\Der\!\Lambda(M)$. 
Arbitrary forms $\blambda_1,\blambda_2\in\Omega(M,\scrO_M)$ with 
bidegrees 
$|\blambda_1|=\begin{pmatrix}r_1 \\ s_1 \end{pmatrix}$ and  
$|\blambda_2|=\begin{pmatrix}r_2 \\ s_2 \end{pmatrix}$ 
satisfy
\begin{align*} 
  \blambda_1\bwedge\blambda_2=(-1)^{r_1r_2}(-1)^{s_1s_2}\blambda_2\bwedge\blambda_1 . 
\end{align*}
For a differential $r$-form $\blambda\in\Omega^{r}(M,\scrO_M)$ ($r\geq 1$), 
and a vector field $\bX\in\Der\!\Lambda(M)$, 
we define the interior product of $\blambda$ and $\bX$, 
$i_{\bX}\blambda\in\Omega^{r-1}(M,\scrO_M)$, by 
\begin{align*}
  \la \bX_1,\cdots,\bX_{r-1};i_{\bX}\blambda \ra = \la \bX_1,\cdots,\bX_{r-1},\bX;\blambda\ra 
\end{align*}
with $\bX_1,\cdots,\bX_{r-1}\in\Der\!\Lambda(M)$. 
If $\blambda$ has a bidegree $\begin{pmatrix} r \\ s \end{pmatrix}$, 
then $i_\bX\blambda$ 
has a bidegree $\begin{pmatrix} r-1 \\ s+|\bX| \end{pmatrix}$ ($\bmod\, 2$).

\section{Lie Derivatives}
\label{sec:Lie Derivatives}

The Lie bracket of two vector fields $\bX, \bY\in\Der\!\Lambda(M)$ is 
the unique vector field defined as 
\begin{align*}
  [\bX, \bY]=\bX\bY - (-1)^{\bX\bY}\bY\bX  ~\in\Der\!\Lambda(M) . 
\end{align*}
The real vector space structure of $\Der\!\Lambda(M)$
endowed with the Lie bracket $[~ ,~ ]$ on $\Der\!\Lambda(M)$ 
form a $\integers_2$-graded Lie algebra. That is, 
$[~, ~]$ is $\realnumbers$-bilinear 
and, for any $\bX,\bY,\bZ\in\Der\!\Lambda(M)$, 
there exist symmetry 
\begin{align*}
  [\bY, \bX]=-(-1)^{\bX\bY} [\bX, \bY] 
\end{align*}
and Jacobi identity 
\begin{align*}
  (-1)^{\bX\bZ}[\bX, [\bY,\bZ]] 
  + (-1)^{\bY\bX}[\bY, [\bZ,\bX]] 
  + (-1)^{\bZ\bY}[\bZ, [\bX,\bY]] =0 .
\end{align*} 
For a given vector field $\bX\in\Der\!\Lambda(M)$, we define a map 
$L_\bX: \Der\!\Lambda(M)\to\Der\!\Lambda(M)$, 
the Lie derivative on $\Der\!\Lambda(M)$, as 
\begin{align}
  L_{\bX}\bY=[\bX,\bY] \quad (\bY\in\Der\!\Lambda(M)) .
  \label{math:LieDer_vec}
\end{align}
The following are trivial restatements of the properties just mentioned above.  
$\realnumbers$-bilinearlity 
\begin{align*}
  &L_{\bX}(c_1\bY_1+c_2\bY_2)=c_1L_{\bX}\bY_1+c_2L_{\bX}\bY_2 , \\  
  &L_{c_1\bX+c_2\bX_2}\bY=c_1L_{\bX_1}\bY+c_2L_{\bX_2}\bY , 
\end{align*}
where $c_1,c_2\in\realnumbers$, 
symmetry
\begin{align*}
  L_{\bY}\bX=-(-1)^{\bX\bY}L_{\bX}\bY , 
\end{align*}
and Leibniz rule (Jacobi identity)
\begin{align*}
  L_{\bX}[\bY,\bZ]=[L_{\bX}\bY,\bZ]+(-1)^{\bX\bY}[\bY,L_{\bX}\bZ] . 
\end{align*}

We can naturally generalise the above mentioned Lie derivative 
$L_\bX$ on $\Der\!\Lambda(M)$ 
to obtain another kind of Lie derivative 
$L_\bX: \Lambda(M)\to\Lambda(M)$, 
the Lie derivative on $\Lambda(M)$, so that 
\begin{align}
  L_\bX (f\wedge\bY)=L_\bX f\wedge \bY + (-1)^{\bX f} f\wedge L_\bX \bY 
\end{align}
for any $f\in\Lambda(M)$ and $\bX,\bY\in\Der\!\Lambda(M)$. 
The resultant definition of $L_\bX: \Lambda(M)\to\Lambda(M)$ is 
\begin{align}
  L_\bX f = \bX f \quad (f\in\Lambda(M)) .
\end{align}
The Lie derivative on $\Lambda(M)$, just defined, is a derivation on the 
$\integers_2$-graded commutative algebra $\Lambda(M)$. 
That is, it is $\realnumbers$-linear and satisfies Leibniz rule 
\begin{align}
  L_\bX(f\wedge g)
  =\left( L_\bX f \right) \wedge g 
  +(-1)^{\bX f} f\wedge \left( L_\bX g \right) 
\end{align}
for $f,g\in\Lambda(M)$.  The following formula is easily verified.  
\begin{align}
  L_{[ \bX, \bY]}f=[L_\bX,L_\bY]f
\end{align}

Similarly, there is a natural extension of 
the Lie derivatives $L_\bX$ on $\Der\!\Lambda(M)$ and on $\Lambda(M)$ 
to $L_\bX: \Omega(M,\scrO_M)\to\Omega(M,\scrO_M)$, 
the Lie derivative on $\Omega(M,\scrO_M)$. 
That is, for $\blambda\in\Omega^r(M,\scrO_M)$,  
we define $L_\bX \blambda\in\Omega^r(M,\scrO_M)$ so that 
\begin{align*}
  &L_\bX \la \bY_1,\bY_2,\cdots,\bY_r ; \blambda \ra 
  = \la L_\bX \bY_1,\bY_2,\cdots,\bY_r ; \blambda \ra 
  +(-1)^{\bX\bY_1} \la\bY_1, L_\bX \bY_2,\cdots,\bY_r ; \blambda \ra 
  +\cdots \notag \\ 
  &
  +(-1)^{\bX(\bY_1+\bY_2+\cdots+\bY_{r-1})} \la\bY_1, \bY_2,\cdots,L_\bX \bY_r ; \blambda \ra 
  + (-1)^{\bX(\bY_1+\bY_2+\cdots+\bY_r)} \la \bY_1,\bY_1,\cdots,\bY_r ; L_\bX \blambda \ra
\end{align*}
for any $\bY_1,\cdots,\bY_r\in\Der\!\Lambda(M)$.  
The resultant definition of $L_\bX: \Omega(M,\scrO_M)\to\Omega(M,\scrO_M)$ turns out 
to be 
\begin{align}
  L_{\bX}\blambda=(-1)^r(i_\bX \bd -\bd i_\bX)\blambda 
  \quad (\blambda\in\Omega^r(M,\scrO_M))
  \label{math:L_X(lambda)}
\end{align}
with $r\geq 1$. 
Note for the unusual sign in this formula. 
If $\blambda$ has a bidegree $\begin{pmatrix} r \\ s \end{pmatrix}$, 
its Lie derivative $L_\bX\blambda$ has a bidegree $\begin{pmatrix} r \\ s+|\bX| \end{pmatrix}$ ($\bmod\, 2$). 
Lie derivative on $\Omega(M,\scrO_M)$ satisfies 
\begin{align}
  i\left( L_{\bX}\bY \right)\blambda 
  =\left( L_{\bX}i_{\bY}-(-1)^{\bX\bY}i_{\bY}L_{\bX} \right) \blambda 
  \label{math:iL_X(Y)}
\end{align}
where $i(\bZ)\blambda$ implies $i_\bZ\blambda$.

\section{Symplectic Structure}
\label{sec:Symplectic Structure}

We now define our \textit{symplectic form} $\bomega\in\Omega^2(M,\scrO_M)$ so that 
\begin{align}
  \bomega 
  &= 
  -\bd \varphi^a \bwedge \bd p_a 
  -\bd \psi^\alpha \bwedge \bd \pi_\alpha 
\end{align}
on $(U,\scrO_U)$. 
$\bomega$ is a closed, non-degenerate 2-form 
with a $(\integers_2\times\integers_2)$-bidegree 
$|\bomega|= \begin{pmatrix} 2 \\ n+1 \end{pmatrix}$ ($\bmod\, 2$). 
Note that when $n$ is even, $\bomega$ becomes what is called an \textit{odd symplectic form}, 
and when $n$ is odd, an \textit{even symplectic form} \cite{Rogers}.  

The symplectic structure $\bomega$ leads to natural identification between 
a vector field on $(M,\scrO_M)$ and a 1-form on $(M,\scrO_M)$. 
Indeed, if we define a map $\flat: \Der\!\Lambda(M)\to\Omega^1(M,\scrO_M)$, $\bX\mapsto i_{\bX}\bomega$, 
namely 
\begin{align}
  \flat(\bX)=i_\bX \bomega , 
\end{align}
then the map $\flat$ gives a linear isomorphism between $\Der\!\Lambda(M)$ and $\Omega^1(M,\scrO_M)$. 
We denote by $\sharp$ the inverse map of $\flat$. 
\begin{Def}
The Hamiltonian vector field $\bX_f\in\Der\!\Lambda(M)$ generated by 
an arbitrary differentiable differential form $f\in\Lambda(M)$ 
is the unique vector field on $(M,\scrO_M)$ determined by $\bX_f=\sharp\, \bd f$. 
\end{Def}
Note that the Hamiltonian vector field $\bX_f$ has a $\integers_2$-graded degree $|\bX_f|=|f|+n+1$ ($\bmod\, 2$). 
By definition, it satisfies 
\begin{align}
  \bd f= i_{\bX_f}\bomega .
  \label{math:d(f)=i(X_f)(omega)}
\end{align}
With canonical coordinates 
$\varphi^a$, $p_a$, $\psi^\alpha$, $\pi_\alpha$ $\in\Lambda(U)$, 
$\bX_f$ can be written as 
\begin{align}
  \bX_f
  &=(-1)^{(n+1)(f+1)}\frac{\pa f}{\pa p_a}\wedge\frac{\bpa}{\pa\varphi^a}
  -\frac{\pa f}{\pa\varphi^a}\wedge\frac{\bpa}{\pa p_a} \notag \\
  &\qquad 
  +(-1)^{n(f+1)}\frac{\pa f}{\pa\pi_\alpha}\wedge\frac{\bpa}{\pa\psi^\alpha}
  +(-1)^{f+n}\frac{\pa f}{\pa\psi^\alpha}\wedge\frac{\bpa}{\pa\pi_\alpha} .
  \label{math:HamiltonianVF}
\end{align}
\begin{Def}
The Poisson bracket of two differentiable differential forms $f,g\in\Lambda(M)$ is 
the differentiable differential form $\{f,g\}\in\Lambda(M)$ defined as 
\begin{align}
\begin{split}
  \{f,g \}
  &:=-\la \bX_f,\bX_g;\bomega \ra
  =-\la \bX_f;i_{\bX_g}\omega \ra
  \\ &\qquad 
  =-i_{\bX_f}i_{\bX_g}\bomega 
  =-i_{\bX_f}\bd g
  =-\la \bX_f;\bd g\ra
  =-\bX_f g
  =-L_{\bX_f}g . 
\end{split}\label{math:PB_def}
\end{align}
\end{Def}
The Poisson bracket $\{f, g\}$ has a $\integers_2$-graded degree 
$|\{ f,g \}|=|f|+|g|+n+1$ ($\bmod\, 2$). 
The real vector space structure of $\Lambda(M)$
endowed with the Poisson bracket $\{~ ,~ \}$ 
form a $\integers_2$-graded Lie algebra, 
as it will be obvious from formulae in what follows. 
The definition $\{f,g \}=-\bX_f g$ of the Poisson bracket 
together with (\ref{math:HamiltonianVF}) directly yields 
\begin{align}
\begin{split}
  \{f,g\}
  &=-(-1)^{(n+1)(f+1)}\frac{\pa f}{\pa p_a}\wedge\frac{\pa g}{\pa\varphi^a}
  +\frac{\pa f}{\pa\varphi^a}\wedge\frac{\pa g}{\pa p_a} \\
  &\qquad 
  -(-1)^{n(f+1)}\frac{\pa f}{\pa\pi_\alpha}\wedge\frac{\pa g}{\pa\psi^\alpha}
  -(-1)^{f+n}\frac{\pa f}{\pa\psi^\alpha}\wedge\frac{\pa g}{\pa\pi_\alpha} . 
\end{split}\label{math:PB} 
\end{align}
This result reproduces the heuristic definitions of 
our Poisson bracket, all of (\ref{ph:n=evenPB}) and (\ref{ph:n=oddPB}), completely. 
\begin{Prop} 
The assignment $f\mapsto\bX_f$ yields a $\integers_2$-graded Lie algebra homomorphism from 
$\Lambda(M)$ to $\Der\!\Lambda(M)$, that is, 
\begin{align}
  \bX_{ \{ f,g \} } = -\left[ \bX_f, \bX_g \right] .
  \label{math:X_(f,g)=-[Xf,Xg]}
\end{align}
\end{Prop}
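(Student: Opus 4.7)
The plan is to exploit the non-degeneracy of $\bomega$: since $\flat\colon \Der\!\Lambda(M)\to\Omega^1(M,\scrO_M)$ is a linear isomorphism, proving $\bX_{\{f,g\}}=-[\bX_f,\bX_g]$ reduces to comparing their $\flat$-images. By (\ref{math:d(f)=i(X_f)(omega)}) applied to $\{f,g\}\in\Lambda(M)$, one has $\flat(\bX_{\{f,g\}})=\bd\{f,g\}$, so everything comes down to establishing the identity
$$\bd\{f,g\}=-i_{[\bX_f,\bX_g]}\bomega.$$

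To compute the right-hand side, I would apply formula (\ref{math:iL_X(Y)}) with $\bX=\bX_f$, $\bY=\bX_g$ and $\blambda=\bomega$, using $L_{\bX_f}\bX_g=[\bX_f,\bX_g]$ from (\ref{math:LieDer_vec}). This gives
$$i_{[\bX_f,\bX_g]}\bomega = L_{\bX_f}i_{\bX_g}\bomega-(-1)^{\bX_f\bX_g}i_{\bX_g}L_{\bX_f}\bomega.$$
The second term vanishes outright: formula (\ref{math:L_X(lambda)}) on the $2$-form $\bomega$ gives $L_{\bX_f}\bomega=(i_{\bX_f}\bd-\bd i_{\bX_f})\bomega=-\bd(\bd f)=0$, using closure $\bd\bomega=0$, the defining relation $i_{\bX_f}\bomega=\bd f$, and $\bd^2=0$ on the ringed space $(M,\scrO_M)$.

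What remains is $L_{\bX_f}i_{\bX_g}\bomega=L_{\bX_f}\bd g$, and I would finish by commuting the Lie derivative through $\bd$. A short calculation from (\ref{math:L_X(lambda)}) and $\bd^2=0$ shows $L_\bX\bd=\bd L_\bX$ on forms of any bidegree; applied to the $0$-form $g$ this yields $L_{\bX_f}\bd g=\bd(\bX_f g)=\bd L_{\bX_f}g$. Combined with the last equality in (\ref{math:PB_def}), $L_{\bX_f}g=-\{f,g\}$, we obtain $L_{\bX_f}\bd g=-\bd\{f,g\}$, which is exactly the identity required. The main technical obstacle is bookkeeping the $(\integers_2\times\integers_2)$-bigraded signs, especially the unusual $(-1)^r$ in (\ref{math:L_X(lambda)}) and the factor $(-1)^{\bX_f\bX_g}$ in (\ref{math:iL_X(Y)}); fortunately the argument sidesteps the latter because $L_{\bX_f}\bomega$ vanishes, leaving only the verification that $L_\bX$ commutes with $\bd$ as the one delicate sign check.
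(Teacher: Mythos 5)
Your proposal is correct and follows essentially the same route as the paper's own proof: reduce to the $\flat$-images via non-degeneracy, apply (\ref{math:iL_X(Y)}) with $L_{\bX_f}\bX_g=[\bX_f,\bX_g]$, kill the second term using $L_{\bX_f}\bomega=0$, and finish with $L_{\bX_f}\bd g=\bd L_{\bX_f}g=-\bd\{f,g\}$. The two auxiliary identities you isolate are precisely the paper's equations (\ref{math:L_bX(omega)=0}) and (\ref{math:(L_X)dg=d(L_X)g}), so there is nothing substantive to add.
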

\begin{proof}
Noticing 
\begin{align}
  &L_{\bX_f}\bomega
  =\left( i_{\bX_f}\bd - \bd i_{\bX_f} \right) \bomega 
  =-\bd i_{\bX_f}\bomega
  =-\bd(\bd f)
  =0 ,
  \label{math:L_bX(omega)=0} \\
  &L_{\bX}\bd g 
  =-\left( i_{\bX}\bd - \bd i_{\bX} \right) \bd g
  =\bd i_\bX \bd g
  =\bd \la \bX; \bd g \ra 
  =\bd \bX g 
  =\bd L_\bX g ,
  \label{math:(L_X)dg=d(L_X)g}
\end{align}
we obtain 
\begin{align*}
  &i\left( -[\bX_f,\bX_g] \right) \bomega 
  =-i\left( [\bX_f,\bX_g] \right) \bomega 
  =-i\left( L_{\bX_f} \bX_g \right) \bomega 
  \notag \\ 
  &\quad 
  =-\left( L_{\bX_f} i_{\bX_g}  - (-1)^{\bX_f\bX_g} i_{\bX_g} L_{\bX_f} \right)\bomega  
  =- L_{\bX_f} i_{\bX_g} \bomega 
  \notag \\ 
  &\quad\quad 
  =-L_{\bX_f}\bd g 
  =-\bd L_{\bX_f}g
  =\bd \{ f,g\}
  =i\left( \bX_{\{ f, g \}} \right) \bomega  .
\end{align*}
This completes the proof. 
\end{proof}
\vspace{2ex}

Now, it is an easy task to give concise proofs to many formulae 
listed in Section \ref{sec:List of Poisson Bracket Formulae}. 

\noindent
\textit{Symmetry} (\ref{ph:n=evenSymmetry})(\ref{ph:n=oddSymmetry}): 
\begin{align}
  \{g,f\}
  &=-\la \bX_g,\bX_f;\bomega\ra \notag \\
  &=-(-1)^{(f+n+1)(g+n+1)}\la \bX_f,\bX_g;\bomega\ra \notag \\
  &=-(-1)^{(f+n+1)(g+n+1)}\{f,g\} .
\end{align}
\textit{Leibniz rule I} (\ref{ph:n=evenLeibnizI})(\ref{ph:n=oddLeibnizI}): 
\begin{align}
  \{f,g\wedge h\}
  &=-L_{\bX_f}(g\wedge h) \notag \\
  &=\left( -L_{\bX_f}g \right) \wedge h 
  + (-1)^{(f+n+1)g} g \wedge \left( -L_{\bX_f}g \right) \notag \\ 
  &=\{ f,g \} \wedge h + (-1)^{(f+n+1)g} g \wedge \{ f,h \} .
\end{align}
\textit{Leibniz rule I\!I} (\ref{ph:n=evenLeibnizII})(\ref{ph:n=oddLeibnizII}): 
\begin{align}
  &\{ \{f,g \}, h\}
  =-\bX_{ \{ f,g \} }h
  =[\bX_f , \bX_g ]h
  =\bX_f \bX_g h - (-1)^{\bX_f \bX_g} \bX_g \bX_f h \notag \\
  &\quad 
  =-\bX_f \{g,h\} +(-1)^{\bX_f \bX_g} \bX_g \{f,h\} 
  =\{f, \{ g,h \} \} - (-1)^{\bX_f \bX_g} \{g, \{ f,h \} \} , 
\end{align}
from which we obtain Leibniz rule I\!I, that is, 
\begin{align}
  \{ f, \{ g, h \} \}
  =\{\{ f, g\}, h\} + (-1)^{(f+n+1)(g+n+1)} \{g, \{ f, h\}\} . 
\end{align}
Putting this formula into a symmetric form, we get 
\begin{align}
  &(-1)^{(f+n+1)(h+n+1)}\{f, \{ g, h \}\} + (-1)^{(g+n+1)(f+n+1)} \{g, \{ h, f \}\} \notag \\ 
  &\qquad\qquad\qquad\qquad\qquad\qquad\qquad 
  + (-1)^{(h+n+1)(g+n+1)} \{h, \{ f, g \}\} =0 , 
\end{align}
which, of course, is \textit{Jacobi identity} (\ref{ph:n=evenJacobi})(\ref{ph:n=oddJacobi}).

\section*{Acknowledgements}
The author would like to thank Prof 
Jos\'e Miguel Figueroa-O'Farrill 
for useful comments as well as 
for his warm hospitality at the University of Edinburgh. 
He is also grateful to the National Institute of Technology for the financial support.



\begin{thebibliography}{99}
\bibitem{Kaminaga}
  Kaminaga Y 2012 Covariant Analytic Mechanics with Differential Forms and Its Application to Gravity 
  \textit{Electronic Journal of Theoretical Physics} \textbf{9} 199--216 
\bibitem{Nakamura}
  Nakamura T 2002 Electromagnetism From a Viewpoint of Differential Forms (Original Titile in Japanese) 
  \textit{Bussei Kenkyu} \textbf{79} 2--42 
\bibitem{Nester}
  Chen C-M, Nester J M and Tung R-S 2015 
  Gravitational Energy for GR and Poincar\'e Gauge Theories: A Covariant Hamiltonian Approach 
  \textit{International Journal of Modern Physics D} \textbf{24} 1530026 (73 pages)
\bibitem{Nakajima1}
  Nakajima S 2016 Application of Covariant Analytic Mechanics with Differential Forms to Gravity with Dirac Field 
  \textit{Electronic Journal of Theoretical Physics} \textbf{13} 95--114 
\bibitem{Nakajima2}
  Nakajima S 2016 Reconsideration of De Donder-Weyl Theory by Covariant Analytic Mechanics 
  \textit{arXiv}:1602.04849 [gr-qc]
\bibitem{DeDonder}
  De Donder Th 1935 \textit{Th\'eorie Invariantive du Calcul des Variations} Nouvelle \'Edition 
  (Paris: Gauthier-Villars) 
\bibitem{Weyl}
  Weyl H 1935 Geodesic Fields in the Calculus of Variations for Multiple Integrals 
  \textit{Annals of Mathematics} \textbf{36} 607--629 
\bibitem{Kastrup}
  Kastrup H A 1983 Canonical Theories of Lagrangian Dynamical Systems in Physics 
  \textit{Physics Reports} \textbf{101} 1--167 
\bibitem{Kanatchikov}
  Kanatchikov I V 1998 Canonical Structure of Classical Field Theory in the Polymomentum Phase Space 
  \textit{Reports on Mathematical Physics} \textbf{41} 49--90 
\bibitem{CrnkovicWitten}
  Crnkovi\'c C and Witten E 1987 Covariant Descripton of Canonical Formalism in Geometrical Theories, 
  in Hawking S W and Israel W (eds) \textit{Three Hundred Years of Gravitation} 
  (Cambridge: Cambridge University Press) 
\bibitem{ForgerRomero}
  Forger M and Romero S V 2005 Covariant Poisson Brackets in Geometric Field Theory 
  \textit{Communications in Mathematical Physics} \textbf{256} 375--410 
\bibitem{ForgerSalles}
  Forger M and Salles M O 2015 On Covariant Poisson Brackets in Classical Field Theory 
  \textit{Journal of Mathematical Physics} \textbf{56} 102901
\bibitem{Sharapov}
  Sharapov A A 2014 On Covariant Poisson Brackets in Field Theory 
  \textit{arXiv}:1412.2902 [hep-th]
\bibitem{Abraham}
  Abraham R and Marsden J E 1978 \textit{Foundations of Mathematics} Second Edition 
  (Reading MA: Benjamin/Cummings Publishing)
\bibitem{Koszul}
  Koszul J-L 1985 Crochet de Schouten-Nijenhuis et cohomologie, in 
  \textit{\'Elie Cartan et les math\'ematiques d'aujourd'hui} - 
  the mathematical heritage of \'Elie Cartan, 
  Proceedings of the seminar held at Lyon, 25--29 juin 1984, 
  Soci\'et\'e Math\'ematique de France, 
  Ast\'erisque, num\'ero hors s\'erie 257--271 
\bibitem{Michor}
  Michor P W 1985 A Generalization of Hamiltonian Mechanics  
  \textit{Journal of Geometry and Physics} \textbf{2} 67--82
\bibitem{Cantrijin}
  Cantrijin F and Ibort L A 1991 Introduction to Poisson supermanifolds 
  \textit{Differential Geometry and its Applications} \textbf{1} 133--152 
\bibitem{Beltran-Monterde}
  Beltr\'an J V and Monterde J 1995 Graded Poisson Structures on the Algebra of Differential Forms 
  \textit{Commentarii Mathematici Helvetici} \textbf{70} 383--402 
\bibitem{Grabowski}
  Grabowski J 1997 $\bZ$-graded Extensios of Poisson Brackets 
  \textit{Reviews in Mathematical Physics} \textbf{9} 1--27
\bibitem{Monterde-Vallejo}
  Monterde J M and Vallejo J A 2002 Poisson Brackets of Even Symplectic Forms on the Algebra of Differential Forms
  \textit{Annals of Global Analysis and Geometry} \textbf{22} 267--289 
\bibitem{Kostant}
  Kostant B 1977 Graded Manifolds, Graded Lie Theory and Prequantization, 
  in Bleuler K and Reetz A (eds) 
  \textit{Differential Geometical Methods in Mathematical Physics} 
  Proceedings of the symposium held at the University of Bonn, 1--4 July 1975, 
  Lecture Notes in Mathematics Vol.\ 570 (Berlin: Springer-Verlag), pp.\ 177--306 
\bibitem{Bartocci}
  Bartocci C, Bruzzo U and Hern\'andez-Ruip\'erez D 1991 
  \textit{The Geometry of Supermanifolds}, Mathematics and Its Applications Vol.\ 71 
  (Dordrecht: Kluwer Academic Publishers)
\bibitem{Tuynman}
  Tuynman G M 2004 \textit{Supermanifolds and Supergroups -- Basic Theory}, Mathematics and Its Applications Vol.\ 570 
  (Dordrecht: Kluwer Academic Publishers)
\bibitem{Rogers}
  Rogers A 2007 \textit{Supermanifolds: Theory and Applications} (Singapore: World Scientific Publishing)
\bibitem{Carmeli}
  Carmeli C, Caston L and Fioresi R 2011 \textit{Mathematical Foundations of Supersymmetry}, 
  EMS Series of Lectures in Mathematics (Z\"urich: European Mathematical Society)
\end{thebibliography}
\end{document}